\documentclass{amsart}
\usepackage{amsmath,amssymb,amsthm, amsfonts,color}

\newtheorem{theorem}{Theorem}
\newtheorem{lemma}[theorem]{Lemma}
\newtheorem{proposition}[theorem]{Proposition}
\theoremstyle{definition}
\newtheorem{definition}[theorem]{Definition}
\theoremstyle{remark}

\makeatletter
\newcommand\thankssymb[1]{\textsuperscript{\@fnsymbol{#1}}}
\makeatother

\newcommand{\Fp}{\mathbb{F}_p}
\newcommand{\Zpos}{\mathbb{Z}^{+}}
\DeclareMathOperator{\cc}{cc}
\DeclareMathOperator{\rank}{rank}
\DeclareMathOperator{\GL}{GL}

\title[List recovery for random low-rate linear codes]{List recovery for random low-rate linear codes}
\date{}

\begin{document}

\author[Isaac M Hair \and Amit Sahai]{Isaac M Hair\thankssymb{1} \and Amit Sahai\thankssymb{2}}

\thanks{\thankssymb{1} UCSB, UCLA. \texttt{isaacmhair@gmail.com}}
\thanks{\thankssymb{2} UCLA. \texttt{sahai@cs.ucla.edu}}

\begin{abstract}
We prove a list recovery guarantee for random low-rate linear codes over sufficiently large prime fields.  For fixed dimension \(d\), error fraction \(\alpha\), and accuracy parameter \(\varepsilon\), a random \(d\)-dimensional linear code \(C \subseteq \Fp^n\) is, with high probability, \((\alpha,\ell,\frac{1+\varepsilon}{1-\alpha}\ell)\)-list recoverable simultaneously for all input list sizes \(\ell\le 2^{O_{\alpha, \varepsilon, d}(n/\log n)}\). The proof is inspired by a work of Matou\v{s}ek, P\v{r}\'{\i}v\v{e}tiv\'{y}, and \v{S}kovro\v{n} on reconstructing point sets from their projections. It combines a deterministic graph-theoretic certificate, a nonvanishing determinant criterion, and the Schwartz--Zippel lemma.  We also give a lower bound showing that any linear code \(C \subseteq \Fp^n\) of dimension at least two \emph{cannot} be \((\alpha,\ell,\frac{1+\varepsilon}{1-\alpha}\ell)\)-list recoverable for feasible list sizes $\ell \geq 2^{\Omega_{\alpha, \varepsilon}(n)}$. In this sense, our result is nearly optimal.
\end{abstract}

\maketitle

\section{Introduction}

In this work, we study list recovery of random linear codes over large prime fields. We use the following standard formulation~\cite{ReschVenkitesh}.

\begin{definition}[$(\alpha, \ell, L)$-List Recoverable]\label{def:list-recoverable}
    A code $C \subseteq \Sigma^n$ is said to be $(\alpha, \ell, L)$-list recoverable if for all $S_1, \ldots, S_n \subseteq \Sigma$ with $\vert S_1 \vert = \ldots = \vert S_n\vert = \ell$, there are at most $L$ codewords $x \in C$ such that
    \[\vert \{ i : x_i \not\in S_i\}\vert \leq \alpha n.\]
\end{definition}

The case $\ell=1$ is ordinary list decoding. List recovery has its roots in soft-decision and list-decoding work for concatenated and Reed--Solomon-type codes, including early concatenated-code bounds and interpolation-based algorithms \cite{ZyablovPinsker,GuruswamiSudan}. It is now a standard primitive in coding theory, for example in expander-code, tensor-code, and local-list-recovery constructions \cite{HemenwayRonZewiWootters}; see also the recent survey \cite{ReschVenkitesh}. Random linear codes form a particularly important test case. Their list-decoding behavior was studied by Guruswami--H\aa stad--Kopparty and Wootters \cite{GuruswamiHastadKopparty,Wootters}. For list recovery, Rudra and Wootters gave average-radius tools that apply in low-rate large-field regimes \cite{RudraWootters}. More recently, Guruswami--Li--Mosheiff--Resch--Silas--Wootters established sharp list-size phenomena and separations from fully random codes near capacity \cite{GLMRSW}, and other works have considered specific linear families, small-field random linear codes, zero-rate thresholds, and Brascamp--Lieb-type combinatorial bounds \cite{LiShagrithaya,DoronMosheiffReschRibeiro,ReschYuanZhang,BrakensiekChenDharZhang}. The present paper studies a complementary, essentially zero-rate random-linear-code regime in which the dimension is fixed while the block length tends to infinity.

\textbf{Our Results.}
In this paper, we study list recovery for random linear codes of fixed dimension $d$ and growing block length $n$. We show that for all constants $\alpha>0$ and $\epsilon>0$, over sufficiently large prime fields, a uniformly random linear code $C \subseteq \Fp^n$ of dimension $d$ is list recoverable with near-optimal output list size $L=\frac{1+\varepsilon}{1-\alpha}\ell$ simultaneously for every input list size $\ell\le 2^{\delta n/\log n}$ for some constant $\delta>0$. The lower bound following the main theorem shows that this dependence on $n$ is close to the best possible: for every linear code of dimension at least two, the same conclusion cannot generally be extended to all
input list sizes of order $2^{\Theta(n)}$. (We remark that the lower bound follows from standard techniques in the coding-theory literature.)

\begin{theorem}[Main theorem]\label{thm:main}
    For all $\alpha \in (0,1)$, $\varepsilon \in (0,1)$, and $d \in \mathbb{Z}^+$, there exist $\delta > 0$, $n_0 \in \mathbb{Z}^+$, where 
$n_0=O_{\alpha,\varepsilon}(d\log d)$ for large enough $d$, and a computable function $f : \mathbb{Z}^+ \rightarrow \mathbb{Z}^+$ such that the following holds. Let $n \geq n_0$ be any integer, and let $p \geq f(n)$ be any prime. With probability at least $1 - \varepsilon$, a random linear code $C \subseteq \Fp^n$ of dimension $d$ is $(\alpha, \ell, \frac{1+\varepsilon}{1-\alpha}\ell)$-list recoverable for all positive integers $\ell \leq 2^{\delta n/\log n}$.
\end{theorem}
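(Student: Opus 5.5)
We write the code as \(C=\{xG : x\in\Fp^d\}\), where \(G\) is a uniformly random \(d\times n\) generator matrix with columns \(g_1,\dots,g_n\in\Fp^d\); since \(G\) is full rank with probability \(1-O(n/p)\), it suffices to work with this model. A codeword \(x\) satisfies the list condition when \(\langle x,g_i\rangle\in S_i\) for at least \((1-\alpha)n\) coordinates \(i\). Following Matou\v{s}ek--P\v{r}\'{\i}v\v{e}tiv\'{y}--\v{S}kovro\v{n}, each coordinate is a projection of the point set \(X\subseteq\Fp^d\) of bad codewords onto the line \(g_i\), and each \(S_i\) is a set of at most \(\ell\) values hit by those projections.

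Suppose a configuration has \(|X|=L+1>\frac{1+\varepsilon}{1-\alpha}\ell\). Double counting incidences gives
\[\sum_i |\{x\in X:\langle x,g_i\rangle\in S_i\}|\ge (1-\alpha)n|X| .\]
Since \(|S_i|\le \ell\), on an average coordinate the projection of \(X\) restricted to the agreeing points has at most \(\ell\) values while covering more than \((1+\varepsilon)\ell\) points. So on many coordinates there are at least \(\varepsilon\ell\) \emph{collisions}, meaning pairs \(x\ne y\) in \(X\) with \(\langle x-y,g_i\rangle=0\). I would build a deterministic graph-theoretic certificate from this. Form a graph on \(X\) by choosing, for each coordinate \(i\) in a set \(I\), a spanning forest of the collision classes, and label its edges by \(i\). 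A counting argument over forests shows that the union contains a small sub-structure: a connected subgraph on some \(k\) vertices, with \(k\) in a controlled range, using \(m\) distinct coordinates where \(m\) exceeds \(d(k-1)\) by a margin. Concretely, the edges impose more independent linear constraints on the differences \(x-y\) than there are degrees of freedom in \(k\) points modulo translation. The key is that the excess is linear in the number of coordinates involved, and \(k\) can be kept at most about \(n/\log n\) times a constant when \(\ell\le 2^{\delta n/\log n}\).

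Next comes a nonvanishing determinant criterion. Fix the combinatorial certificate: the vertex count \(k\), the edge list, and the coordinate labels. The event that some points \(x_1,\dots,x_k\), not all equal within each edge, satisfy \(\langle x_u-x_v,g_i\rangle=0\) for all labeled edges is the event that a certain \(m\times d(k-1)\) matrix with entries linear in the \(g_i\) has rank below full on the relevant subspace. Under the certificate this is contained in the vanishing of some maximal minor, viewed as a polynomial in the entries of \(G\). I would show that this minor is not identically zero by exhibiting one explicit specialization of the \(g_i\), for example generic moment-curve vectors, under which the constraints force all points to coincide. This is where the forest and connectivity structure is used, in the spirit of a Hall-type or rigidity argument. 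By Schwartz--Zippel, each certificate then fails with probability at most \(\deg/p\le \mathrm{poly}(n,d)/p\).

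Finally, take a union bound over all certificates. The number of choices is roughly \(n^{O(k)}\cdot k^{O(k)}\), summed over \(k\) up to \(O(n/\log n)\), which is at most \(2^{O(n)}\). Choosing \(f(n)\) as this count times the degree bound divided by \(\varepsilon\) makes the failure probability at most \(\varepsilon\), uniformly over all \(\ell\). The condition \(n\ge n_0=O(d\log d)\) is needed so that the margin from double counting beats \(d(k-1)\).

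I expect the main obstacle to be the certificate step: extracting a connected collision structure whose number of labeled coordinates strictly exceeds \(d(k-1)\), while keeping \(k\) small enough that the \(\log n\) loss only costs \(n/\log n\) in the exponent, and ensuring that no chosen constraints are trivially dependent. My first attempt would be a greedy or averaging argument: repeatedly merge collision forests and pick the densest connected subgraph via a Nash-Williams-type density bound. The nondegeneracy witness for the determinant would come second.
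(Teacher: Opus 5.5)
\paragraph{Verdict.} There is a genuine gap. Your outline has the right overall shape: a collision graph colored by coordinates, a determinant certificate, Schwartz--Zippel, and a union bound absorbed by a huge $f(n)$. But the step that carries the proof is missing, and the certificate you describe would not work.

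\paragraph{Why your certificate fails.} A connected graph on $k$ vertices carrying more than $d(k-1)$ constraints does not make the constraint matrix generically of rank $d(k-1)$. This holds even if the constraints come from that many distinct coordinates. The rank depends on how the constraints are spread across cuts and on which edges share the same vector $g_i$.
\begin{itemize}
\item For example, take $d=2$ and $k=3$. Put coordinates $1,\dots,5$ on the pair $\{x_1,x_2\}$ and coordinate $6$ on $\{x_2,x_3\}$. The graph is connected and $6>4=d(k-1)$.
\item Yet the system only forces $x_1=x_2$ plus one linear condition on $x_3-x_2$. Its rank is $3$, so every maximal minor vanishes identically, and no specialization (moment curve or otherwise) can witness nonvanishing.
\item A Nash-Williams density argument would give edge-disjoint spanning trees, but that is still not enough. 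Edges of the same color carry the same $g_i$: if two trees both use only one common color, the rank is at most $k-1$.
\end{itemize}
Your appeal to a ``Hall-type or rigidity argument'' stands in for exactly the structural condition you need to identify.

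\paragraph{What the paper uses instead.} The certificate is $d$ spanning trees on some $W$ whose \emph{color sets are pairwise disjoint}. Setting $g_i=e_r$ for every color $i$ of $T_r$ makes the square $d(w-1)\times d(w-1)$ tree-equation matrix block diagonal, with reduced tree incidence matrices as blocks. Its determinant is therefore $\pm1$, and no excess of constraints is needed.

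Extracting such trees is the real combinatorial work, and it dictates the construction:
\begin{itemize}
\item Pair points inside each fiber into a \emph{matching}, not a spanning forest. Then one color crosses a cut $(A,W\setminus A)$ in at most $|A|$ edges.
\item Keep the more than $\theta n$ coordinates whose matchings have at least $\beta b$ edges.
\item Choose $W$ maximizing $e(U)/(|U|\log|U|)$. This forces at least $\rho\log(w/|A|)$ distinct colors across every cut.
\item Show, by a potential argument on the logarithm of the number of components, that each of $d$ disjoint random blocks of $O(\beta^{-1}\log b\,(\log\log b+\log d))$ colors connects $G[W]$.
\end{itemize}
The resulting requirement $n\gtrsim\log b\,\log\log b$ is what produces $\ell\le 2^{\delta n/\log n}$.

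\paragraph{The $n/\log n$ loss.} Your explanation, that the certificate size can be kept at $k=O(n/\log n)$, is unsupported and is not what happens. The paper's $W$ can have size up to $B_n\approx K2^{\delta n/\log n}$. The union bound over about $(B_n^2n)^{dB_n}$ certificates is harmless only because $f(n)$ may be chosen that large.
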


\begin{theorem}[Exponential lower bound]\label{thm:lower}
    For all $\alpha \in (0,1)$ and $\varepsilon \in (0,1)$, there exist $\delta > 0$, $n_0 \in \mathbb{Z}^+$, and a computable function $f : \mathbb{Z}^+ \rightarrow \mathbb{Z}^+$ such that the following holds. Let $n \geq n_0$ be any integer, and let $p \geq f(n)$ be any prime. Let $C \subseteq \Fp^n$ be any linear code of dimension at least two. For every feasible list size $\ell$, i.e. every integer $\ell$ with $2^{\delta n}\le \ell\le p$, $C$ is not $(\alpha, \ell, \frac{1+\varepsilon}{1-\alpha}\ell)$-list recoverable.
\end{theorem}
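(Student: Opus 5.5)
The plan is to build an explicit bad configuration inside a $2$-dimensional subcode. Fix linearly independent $u,v\in C$ and identify the subcode $\{au+bv\}$ with $\Fp^2$ via the injective map $(a,b)\mapsto au+bv$. Coordinate $i$ then becomes a linear functional $\varphi_i(a,b)=au_i+bv_i$ on $\Fp^2$. So it suffices to find a point set $P\subseteq\Fp^2$ with $|P|>\frac{1+\varepsilon}{1-\alpha}\ell$ and a set $J$ of at least $(1-\alpha)n$ coordinates with $|\varphi_j(P)|\le \ell$ for every $j\in J$. We then take $S_j\supseteq\varphi_j(P)$ for $j\in J$, padded arbitrarily to size exactly $\ell$ (possible since $\ell\le p$), and take $S_i$ arbitrary for $i\notin J$. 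Every codeword coming from $P$ then misses at most $n-|J|\le\alpha n$ lists. This is the ``reconstruction from projections fails'' phenomenon: a set can be large while all its projections are small.

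\textbf{Construction.} For each $j$ pick a nonzero $e_j\in\ker\varphi_j$; this kernel is a line, or all of $\Fp^2$ if $\varphi_j=0$. Let $m=\lceil(1-\alpha)n\rceil$, $J=\{1,\dots,m\}$, and fix an integer $q\ge 2$ with $q>\frac{2(1+\varepsilon)}{1-\alpha}$. Define the box
\[
B=\Bigl\{\textstyle\sum_{j\in J}c_j\lambda_j e_j : c_j\in\{0,\dots,q-1\}\Bigr\}
\]
with scalars $\lambda_j\in\Fp^*$ to be chosen, and set $P=R+B$, where $R$ is a set of $r=\lfloor \ell/q^{m-1}\rfloor$ base points. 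Since $\varphi_j(\lambda_je_j)=0$, the image $\varphi_j(B)$ has at most $q^{m-1}$ elements, so $|\varphi_j(P)|\le r\,q^{m-1}\le \ell$ for each $j\in J$. If all the sums are distinct, then $|P|=r q^m\ge q(\ell-q^{m-1})$. Choose $\delta$ so that $2^{\delta n}\ge 2q^{m}$ for $n\ge n_0$, which needs roughly $\delta>(1-\alpha)\log_2 q$ plus slack. Then $\ell\ge 2q^{m-1}$, so $|P|\ge q\ell/2>\frac{1+\varepsilon}{1-\alpha}\ell$, as required.

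\textbf{Distinctness, the main step needing care.} We need the $q^m$ sums in $B$ to be distinct and the translates $x+B$, $x\in R$, to be pairwise disjoint. For $B$, each equality between two distinct coefficient vectors $c\ne c'$ reads $\sum_j (c_j-c'_j)\lambda_j e_j=0$. Viewed as a polynomial identity in $(\lambda_j)$, the left side is nonzero: the $e_j$ may be parallel for different $j$, but then some coordinate of the sum is a nonzero linear form in the $\lambda$'s with integer coefficients of absolute value less than $q$, which stays nonzero mod $p$ once $p>q$. Schwartz--Zippel over $\Fp^m$ (restricted to $\lambda_j\ne0$), together with a union bound over fewer than $q^{2m}$ pairs, gives a good choice of $\lambda$ once $p\gg q^{2m}$. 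For $R$, choose its points greedily in $\Fp^2$, each avoiding $x+(B-B)$ for every previously chosen $x$. This forbids at most $r(2q-1)^{2}\cdot(2q-1)^{m}$-many points at each step, well below $p^2$ since $r\le\ell\le p$, provided $p\ge (2q)^{2n}$, say. Setting $f(n)=(2q)^{2n}+q^{2n}+1$ makes both steps work. Since $q$ depends only on $\alpha,\varepsilon$, the functions $\delta,n_0,f$ are computable and independent of $C$.

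I expect the only real obstacle to be the bookkeeping in the distinctness step: handling coincident kernel directions and the $\varphi_j=0$ case. The counting itself, i.e.\ the choices of $q$ and $r$ and the resulting inequality $|P|>\frac{1+\varepsilon}{1-\alpha}\ell$, is routine once the box construction is in place.
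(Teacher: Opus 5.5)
Your proof is correct, but you build the large point set differently from the paper. The paper works with the distinct kernel lines $U_1,\dots,U_r$ of the nonzero coordinate forms and takes pairwise nonparallel $u_j\in U_j$. It then uses a single generalized arithmetic box $\{\sum_j a_js_ju_j:0\le a_j<t_j\}$, with $r-1$ sides equal to an integer $T>K+2$ and one long side $t_1=\lfloor K\ell/T^{r-1}\rfloor+1$, where $K=\frac{1+\varepsilon}{1-\alpha}$. Every coordinate projection has size at most $\ell$, so the chosen codewords have no bad coordinates at all. That box can have up to $(K+1)p$ points, so its distinctness step needs a sharper count. A difference vector with support of size at least two imposes two independent conditions on the scalings, which is where nonparallelism is used. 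This gives at most $p^{r-2}$ bad scalings for each of at most $2^n(K+1)p$ difference vectors.

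You instead keep the box $B$ of bounded size $q^m$, independent of $\ell$ and $p$. Then one nonvanishing linear condition per pair suffices, and repeated or parallel kernel directions are harmless. The clean reason is that each $\lambda_j$ is a separate variable: the coefficient $(c_j-c'_j)(e_j)_k$ of $\lambda_j$ in a coordinate $k$ with $(e_j)_k\neq0$ is nonzero. Your phrase about integer coefficients of absolute value less than $q$ is not quite the right justification. You then supply the missing factor $\ell/q^{m-1}$ with greedily chosen translates. These are separated by the crude count $|R|\,(2q-1)^m\le p(2q-1)^m<p^2$; the extra factor $(2q-1)^2$ in your bound is unnecessary. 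You also spend the $\alpha n$ error budget on the coordinates outside $J$, which the paper never does.

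The trade-off is as follows. The paper's configuration is error-free, so it defeats list recovery even when no errors are allowed, and $\alpha$ enters only through $K$. Your distinctness step is more elementary. Using the error budget also lets $\delta$ be just above $(1-\alpha)\log_2 q$, with $q$ an integer just above $2K$. This is much smaller than the paper's $\log_2 T$ when $\alpha$ is close to $1$. Your construction would also work with $J=[n]$, at the price of $\delta$ slightly above $\log_2 q$.
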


\textbf{Statement on AI use.}
The human authors are fully responsible for the contents of this paper.
The questions posed and answered in this work are fully due to the human authors.
The human authors obtained a less optimal form of the results of this paper entirely using human reasoning. However, we then posed our question -- without providing any information about our approach -- to the UCLA Moonshot AI for Math harness~\cite{MoonshotHarness} which made extensive use of queries to GPT 5.5Pro. 
The proof obtained by the harness was superior to our human-devised one both quantitatively and \emph{qualitatively}. We also note that GPT5.5Pro on its own was not able to prove this theorem when the authors attempted to query it directly. Almost all of this paper was directly written by the harness after some post-processing by GPT 5.5Pro.

\textbf{Proof Overview.}
We start by explaining the connection between list recovery and certain colored graph problems. Consider a linear code $C=\operatorname{im}M\subseteq\Fp^n$ that is not list recoverable with the desired parameters, where $M:\Fp^d\to\Fp^n$ has coordinate forms $\lambda_1,\ldots,\lambda_n$. Then there is a set $A\subseteq\Fp^d$ of more than $\frac{1+\varepsilon}{1-\alpha}\ell$ messages whose codewords land in prescribed lists $S_1,\ldots,S_n$ in all but an $\alpha$-fraction of the coordinates. For a fixed coordinate $i$, the points of $A$ that land in $S_i$ are distributed among at most $\ell$ fibers of $\lambda_i$. Since $|A|$ is larger than the target output list size, a counting argument shows that many coordinates force many pairs of points of $A$ to lie in common fibers. Pairing such points inside the fibers gives a matching, and we color each edge by the coordinate $i$ that produced it. Thus a failure of list recovery yields an edge-colored graph on the vertex set $A$: colors are coordinates, color classes are matchings, and an edge $\{a,a'\}$ of color $i$ records the linear equation $\lambda_i(a-a')=0$.

Once the problem has been converted into this graph language, we use a graph-theoretic and linear-algebraic strategy adapted from a work of Matou\v{s}ek, P\v{r}\'{\i}v\v{e}tiv\'y, and \v{S}kovro\v{n} \cite{MPS}. Section~2 proves the graph-theoretic ingredient: if many colors each form a large matching, then on some vertex subset one can find $d$ spanning trees whose color sets are pairwise disjoint. The proof first shows that a random set of colors connects a graph under a suitable color-expansion hypothesis, and then obtains that expansion from a maximal-density induced subgraph.

Section~3 turns these color-disjoint spanning trees into an algebraic certificate. The tree equations form a square linear system in the point differences, and the disjointness of the colors gives a determinant polynomial that is not identically zero. Whenever this determinant evaluates nonzero, the equations force all points in the certificate to coincide. A union bound over certificates, together with the Schwartz--Zippel lemma, shows that random coordinate forms avoid all small determinant certificates with high probability. Section~4 records the deterministic consequence: if the coordinate forms avoid these certificates, then no large set of messages can lie close to the prescribed coordinate lists. Section~5 applies this to a random linear map $\Fp^d\to\Fp^n$ and then conditions on full rank to obtain a uniformly random $d$-dimensional subspace. Finally, Section~6 proves the lower bound by passing to a two-dimensional subcode and building a generalized arithmetic box whose coordinate projections all have size at most $\ell$ while the box itself contains more than $\frac{1+\varepsilon}{1-\alpha}\ell$ codewords.

\section{Preliminaries and colored graphs}

For a positive integer \(m\), write $[m]=\{1,\ldots,m\}$. 
All graphs are finite. Multigraphs are allowed unless explicitly excluded. A graph is loopless if it has no loops. In an edge-colored multigraph, a color class is a matching if no two edges of that color share a vertex. For an edge-colored graph \(H\) with color set \([m]\) and \(I\subseteq[m]\), let \(H[I]\) denote the spanning subgraph containing exactly the edges whose colors lie in \(I\). For a graph \(J\), let \(\cc(J)\) be the number of connected components, isolated vertices included.

For \(b\ge 2\), define
\[
\Lambda(b):=\max\{1,\log b\,\log\log(4b)\}.
\]
The function \(\Lambda\) is nondecreasing on \([2,\infty)\): indeed, if
\[
g(b)=\log b\,\log\log(4b),
\]
then
\[
g'(b)=\frac{1}{b}\log\log(4b)+\frac{\log b}{b\log(4b)}>0
\qquad(b\ge2),
\]
and the maximum of two nondecreasing functions is nondecreasing.

We use the following standard form of the Schwartz--Zippel lemma without proof.

\begin{lemma}[Schwartz--Zippel \cite{Schwartz,Zippel}]\label{lem:schwartz-zippel}
Let \(F\) be a field, let \(S\subseteq F\) be finite and nonempty, and let
\[
P\in F[x_1,\ldots,x_N]
\]
be a nonzero polynomial of total degree at most \(D\). If \(u_1,\ldots,u_N\) are independent and uniformly distributed in \(S\), then
\[
\Pr(P(u_1,\ldots,u_N)=0)\le \frac{D}{|S|}.
\]
\end{lemma}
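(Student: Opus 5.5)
The statement to prove is the Schwartz--Zippel lemma. I plan to prove it by induction on the number of variables \(N\). This is the standard route, and it needs only the fact that a nonzero univariate polynomial over a field has at most as many roots as its degree.

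For the base case \(N=1\), a nonzero \(P\in F[x_1]\) of degree at most \(D\) has at most \(D\) roots in \(F\). So at most \(D\) of the \(|S|\) equally likely values of \(u_1\) give \(P(u_1)=0\), and the probability is at most \(D/|S|\). If \(D=0\), then \(P\) is a nonzero constant and the probability is \(0\).

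For the inductive step, I write
\[
P=\sum_{j=0}^{k} x_N^{\,j}\,Q_j(x_1,\ldots,x_{N-1}),
\]
where \(k\) is the largest power of \(x_N\) that actually occurs, so that \(Q_k\neq 0\). Then \(Q_k\) has total degree at most \(D-k\). I condition on \(u_1,\ldots,u_{N-1}\) and split the zero event into two cases.

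\emph{Case 1: \(E=\{Q_k(u_1,\ldots,u_{N-1})=0\}\).} By the induction hypothesis, \(\Pr(E)\le (D-k)/|S|\).

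\emph{Case 2: \(E\) fails.} Then the specialized polynomial \(P(u_1,\ldots,u_{N-1},x_N)\) is a nonzero univariate polynomial of degree exactly \(k\). Since \(u_N\) is independent of \(u_1,\ldots,u_{N-1}\), the base case gives a conditional probability of at most \(k/|S|\) that it vanishes at \(u_N\).

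Combining the cases,
\[
\Pr\bigl(P(u)=0\bigr)\le \Pr(E)+\Pr\bigl(P(u)=0\mid E^{c}\bigr)\le \frac{D-k}{|S|}+\frac{k}{|S|}=\frac{D}{|S|}.
\]

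I do not expect a real obstacle here. The only points needing care are the degree bound \(\deg Q_k\le D-k\), which holds because each monomial \(x_N^k m\) of \(P\) has total degree at most \(D\), and the use of independence to justify the conditioning in Case 2.
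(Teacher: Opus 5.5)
The paper gives no proof of this lemma; it quotes it from Schwartz and Zippel as a standard tool. Your induction on the number of variables, splitting on whether the leading coefficient \(Q_k\) in \(x_N\) vanishes, is the standard argument and is correct. One cosmetic point: \(\Pr\bigl(P(u)=0\mid E^{c}\bigr)\) is undefined when \(\Pr(E^{c})=0\), which can happen for small \(S\). It is cleaner to bound \(\Pr\bigl(P(u)=0 \text{ and } E^{c}\bigr)\le k/|S|\) directly, by fixing each value of \((u_1,\ldots,u_{N-1})\) outside \(E\).
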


\begin{lemma}[Random colors give connectivity]\label{lem:random-colors}
Let \(H\) be a loopless edge-colored multigraph on \(w\ge2\) vertices with color set \([m]\). Suppose that, for some \(\gamma>0\), every nonempty \(A\subseteq V(H)\) with \(|A|\le w/2\) has at least
\[
\gamma m\log\frac{w}{|A|}
\]
distinct colors crossing the cut \((A,V(H)\setminus A)\). Let \(0<\eta<1\), and let \(I\) be a uniformly random \(T\)-element subset of \([m]\). If \(T\le m\) and
\[
T\ge \frac4\gamma\left(\log\log(4w)+\log(1/\eta)+1\right),
\]
then
\[
\Pr(H[I]\text{ is connected})\ge 1-\eta.
\]
\end{lemma}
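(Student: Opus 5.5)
We would analyse $H[I]$ by revealing the colors of $I$ one at a time, tracking the number of components $k_t=\cc(H[I_t])$ after $t$ colors have been revealed. The plan is to show that a potential of the form $\log\log(4k_t)$ has expected drift at least a constant times $\gamma$ per step. That brings $k_t$ down to a bounded value within $O(\gamma^{-1}\log\log(4w))$ steps. A geometric decay phase lasting $O(\gamma^{-1}\log(1/\eta))$ steps then makes $\Pr(k_T\ge2)\le\eta$. The constant $4/\gamma$ in the hypothesis on $T$ should cover both phases.

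\textbf{Step 1: one-step expected progress.} Fix the first $t$ revealed colors and let the components of $H[I_t]$ be $A_1,\dots,A_k$ with $k\ge2$. Any color having an edge across the cut $(A_j,V\setminus A_j)$ is not yet revealed, since all edges of revealed colors lie inside components. The next color is uniform among the $m-t$ unrevealed colors. For each component with $|A_j|\le w/2$, which is all but at most one, the hypothesis then gives
\[
\Pr(\text{next color crosses } A_j)\ \ge\ \frac{\gamma m\log(w/|A_j|)}{m-t}\ \ge\ \gamma\log\frac{w}{|A_j|}
\]
(capped at $1$). If the new color touches $r$ components, the merge graph on those $r$ components has no isolated vertices, so it has at most $r/2$ components. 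Hence $k_{t+1}\le k_t-r/2$.

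Since $\sum_j|A_j|=w$ and $\log$ is concave, $\sum_{j}\log(w/|A_j|)\ge k\log k$ over all components. Discarding the at most one large component loses only a lower-order term. This yields
\[
\mathbb{E}[k_{t+1}\mid\mathcal F_t]\ \le\ k_t\bigl(1-c\,\min\{1,\gamma\log k_t\}\bigr)
\]
for an absolute constant $c$. For $k_t\in\{2,3\}$ it also gives the cruder bound $\Pr(k_{t+1}<k_t\mid\mathcal F_t)\ge c\gamma$.

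\textbf{Step 2: the drift argument.} Set $\Phi_t=\log\log(4k_t)$. By Jensen (concavity of $\log$, applied twice), Step 1 gives
\[
\mathbb{E}[\Phi_{t+1}\mid\mathcal F_t]\le\Phi_t-c'\gamma
\]
whenever $k_t$ exceeds some absolute constant $K_0$. The heuristic: $\log k$ drops by about $\gamma\log k$, so $\log\log k$ drops by about $\gamma$. Starting from $\Phi_0=\log\log(4w)$, the quantity $\Phi_{t\wedge\tau}+c'\gamma(t\wedge\tau)$ is a supermartingale, where $\tau$ is the first time $k_t\le K_0$. Optional stopping together with Markov's inequality then show that $\tau\le T_1:=O(\gamma^{-1}(\log\log(4w)+\log(1/\eta)))$ except with probability $\eta/2$.

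After $\tau$, the quantity $k_t-1\le K_0$ satisfies $\mathbb{E}[k_{t+1}-1\mid\mathcal F_t]\le(1-c\gamma)(k_t-1)$. After a further $T_2=O(\gamma^{-1}\log(K_0/\eta))$ steps, Markov's inequality gives $\Pr(k\ge2)\le\eta/2$. Because $k_t$ is nonincreasing, more colors only help, so any $T\ge T_1+T_2$ with $T\le m$ suffices.

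\textbf{Main obstacle.} The delicate part is Step 1 with honest constants. We must check that the concavity bound survives dropping the single component of size greater than $w/2$. We must also turn "$k\mapsto k(1-c\gamma\log k)$ in expectation" into a clean additive drift for $\log\log(4k)$ through two applications of Jensen, including when $\gamma\log k_t$ is not small. Finally, we must track constants so that the total budget fits inside $\frac4\gamma(\log\log(4w)+\log(1/\eta)+1)$. If the Jensen step turns out to be lossy, we would instead run explicit phases: $k$ halves with probability at least $1/2$ within $O(1/(\gamma\log k))$ steps. Summing over the dyadic scales of $k$ gives the $\log\log w$ term, and Chernoff-type bounds on the phase lengths give the $\log(1/\eta)$ term.
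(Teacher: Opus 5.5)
Your Step 1 is sound and essentially matches the paper's one-step analysis. Revealed colors cannot cross current components. Discarding the big component still leaves $\sum_{\text{small}}\log(w/|A_j|)\ge k\log k-\log 2\ge \tfrac{k}{2}\log k$. The merge graph loses at least $r/2$ components. Together these give $\mathbb E[k_{t+1}\mid\mathcal F_t]\le k_t\bigl(1-\tfrac{\gamma}{4}\log k_t\bigr)$. The $\min\{1,\cdot\}$ is unnecessary, since singleton cuts force $\gamma\log w\le 1$.

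\textbf{The gap is in Step 2.} Optional stopping for the supermartingale $\Phi_{t\wedge\tau}+c'\gamma(t\wedge\tau)$ gives only $\mathbb E[\tau\wedge T_1]\le \log\log(4w)/(c'\gamma)$. Markov then gives only $\Pr(\tau>T_1)\le \log\log(4w)/(c'\gamma T_1)$. Making this at most $\eta/2$ forces $T_1$ of order $\gamma^{-1}\log\log(4w)/\eta$. That is polynomial in $1/\eta$, not the claimed $O(\gamma^{-1}(\log\log(4w)+\log(1/\eta)))$. Restarting in blocks improves this only to the product $\gamma^{-1}\log\log(4w)\log(1/\eta)$.

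The cause is your second application of Jensen. It converts a multiplicative contraction into an additive drift, and additive drift alone carries no exponential tail. Separately, even with a repaired Phase 1, the two-phase split pays for $\log(1/\eta)$ once in each phase. So the plan will not produce the specific constant $4/\gamma$ in the statement; you flag this yourself, but the plan cannot resolve it.

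\textbf{The fix is to stop after the first Jensen.} From your Step 1 bound,
\[
\mathbb E[\log k_{t+1}\mid\mathcal F_t]\le \log k_t+\log\bigl(1-\tfrac{\gamma}{4}\log k_t\bigr)\le \bigl(1-\tfrac{\gamma}{4}\bigr)\log k_t ,
\]
valid for every $k_t\ge1$. This is exactly the paper's inequality $\mathbb E[Z_{j+1}\mid I_j]\le(1-\gamma/4)Z_j$, which the paper obtains via $\log\frac{x}{x-\Delta}\ge\frac{\Delta}{x}$.

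Iterating gives $\mathbb E[\log k_T]\le (1-\gamma/4)^T\log w\le e^{-\gamma T/4}\log w$. A single Markov inequality on the event $\{\log k_T\ge\log 2\}$ then gives
\[
\Pr(k_T\ge2)\le \frac{\log w}{\log 2}e^{-\gamma T/4}\le \frac{\eta\log w}{e\log 2\,\log(4w)}\le\eta
\]
under the stated hypothesis on $T$. Because $\log(\log w/\eta)=\log\log w+\log(1/\eta)$, the additive form and the exact constant $4/\gamma$ come out automatically. No stopping time and no second phase are needed.
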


\begin{proof}
The singleton cuts imply \(\gamma\log w\le1\), so \(0<\gamma/4<1\). Expose the \(T\) colors one at a time without replacement. Let \(I_j\) be the first \(j\) colors, and set
\[
X_j=\cc(H[I_j]),\qquad Z_j=\log X_j.
\]
We prove that
\[
\mathbb E[Z_{j+1}\mid I_j]\le \left(1-\frac\gamma4\right)Z_j
\]
whenever \(j<T\).

Condition on \(I_j\) and put \(x=X_j\). If \(x=1\), there is nothing to prove, so assume \(x\ge2\). Let \(C_1,\ldots,C_x\) be the current components, and call \(C_i\) small if \(|C_i|\le w/2\). No exposed color crosses a current component. Hence, for a small component \(C\), the next color crosses \(C\) with conditional probability at least
\[
\frac{\gamma m\log(w/|C|)}{m-j}\ge \gamma\log\frac{w}{|C|}.
\]
If \(Y\) is the number of small components crossed by the next color, then
\[
\mathbb E[Y\mid I_j]\ge \gamma\sum_{C\text{ small}}\log\frac{w}{|C|}.
\]
We claim that
\[
\sum_{C\text{ small}}\log\frac{w}{|C|}\ge \frac{x}{2}\log x. \tag{1}
\]
If all components are small, Jensen's inequality gives
\[
\sum_{i=1}^x\log\frac{w}{|C_i|}\ge x\log x.
\]
Otherwise there is exactly one component larger than \(w/2\). If the remaining \(r=x-1\) components have total size \(s\le w/2\), Jensen's inequality gives
\[
\sum_{C\text{ small}}\log\frac{w}{|C|}
\ge r\log\frac{rw}{s}
\ge r\log(2r)
\ge \frac{x}{2}\log x,
\]
since \(r\ge x/2\) and \(2r\ge x\). This proves (1).

Let \(\Delta_j=X_j-X_{j+1}\) be the decrease in the number of components after adding the next color. In the auxiliary graph whose vertices are the current components and whose edges are the newly added crossing edges, if \(t\) vertices are nonisolated, then the number of components drops by at least \(t/2\). Since every component counted by \(Y\) is nonisolated, \(\Delta_j\ge Y/2\). Therefore
\[
\mathbb E[\Delta_j\mid I_j]
\ge \frac{\gamma x\log x}{4}.
\]
Using
\[
\log\frac{x}{x-\Delta_j}\ge \frac{\Delta_j}{x},
\]
we get
\[
\mathbb E[Z_j-Z_{j+1}\mid I_j]
\ge \frac{\gamma}{4}\log x
=\frac\gamma4 Z_j.
\]
This is the desired one-step inequality.

Iterating gives
\[
\mathbb E Z_T
\le \left(1-\frac\gamma4\right)^T\log w
\le e^{-\gamma T/4}\log w.
\]
If \(H[I]\) is disconnected, then \(Z_T\ge\log2\), so Markov's inequality and the assumed lower bound on \(T\) yield
\[
\Pr(H[I]\text{ is disconnected})
\le \frac{\log w}{\log2}\,e^{-\gamma T/4}
\le \eta\frac{\log w}{e(\log2)\log(4w)}
\le \eta.
\]
\end{proof}

\begin{lemma}[Large color matchings force disjoint trees]\label{lem:many-matchings}
Fix \(\beta>0\) and \(d\in\Zpos\). There is a constant \(C_{\rm gr}=C_{\rm gr}(\beta,d)\) such that the following holds for every integer \(b\ge2\).

Let \(G\) be a loopless edge-colored multigraph on \(b\) vertices with color set \([m]\). Suppose every color class is a matching and has at least \(\beta b\) edges. If
\[
m\ge C_{\rm gr}\Lambda(b),
\]
then there is a vertex set \(W\subseteq V(G)\), \(|W|\ge2\), and there are \(d\) spanning trees
\[
T_1,\ldots,T_d\subseteq G[W]
\]
whose color sets are pairwise disjoint.
\end{lemma}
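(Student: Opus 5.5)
The plan follows the overview: first pass to a maximal-density induced subgraph to get color expansion, then apply Lemma~\ref{lem:random-colors} $d$ times with disjoint random color sets.

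\textbf{Step 1: the density functional.} For a vertex set $U\subseteq V(G)$ with $|U|\ge2$, let $m(U)$ be the number of colors having at least one edge inside $G[U]$. Consider the potential
\[
\Phi(U)=\frac{\#\{(c,e): e\text{ an edge of color }c\text{ inside }U\}}{|U|}.
\]
Each color is a matching with at least $\beta b$ edges, so $\Phi(V)\ge \beta m$. I would choose $W$ maximizing a suitably penalized density, for instance $\Phi(U)-\kappa m\log(b/|U|)$ for a small constant $\kappa=\kappa(\beta)$, or simply $\Phi(U)$ among sets with $|U|\ge2$.

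\textbf{Step 2: maximality gives expansion.} Take $A\subseteq W$ with $|A|\le|W|/2$. Maximality of $W$ compared with $W\setminus A$ and with $A$ bounds the edges inside $A$ and inside $W\setminus A$. Hence many edges, about $\beta' m|A|$ after accounting for the penalty term, cross the cut. Each color contributes at most $|A|$ crossing edges, since it is a matching. So the number of distinct crossing colors is at least about $\beta' m$.

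This gives expansion but without the $\log(|W|/|A|)$ factor. To get that factor I would use the penalized version: the term $\kappa m\log(b/|U|)$ lets a small $A$ with few crossing colors produce a contradiction. Removing $A$ loses only $\kappa m(\log(b/|W\setminus A|)-\log(b/|W|))\approx \kappa m|A|/|W|$, so this part needs care. If that fails, I would fall back on a dyadic/iterative pruning: repeatedly delete a violating small set $A$ and charge the lost colors, using $m\ge C\Lambda(b)$ and the $\log\log$ slack to absorb the $\log b$ rounds. The result should be a $W$ with $|W|\ge2$ and with $m_W$ colors appearing inside $G[W]$. Every $A\subseteq W$ with $|A|\le|W|/2$ should have at least $\gamma m_W\log(|W|/|A|)$ crossing colors, with $\gamma\approx c(\beta)/\log b$ and $m_W\ge c(\beta)m$.

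\textbf{Step 3: building the trees.} Partition a random ordering of the $m_W$ colors into $d$ disjoint blocks of size $T=\lfloor m_W/d\rfloor$. Each block is a uniformly random $T$-subset of the colors. Apply Lemma~\ref{lem:random-colors} to $H=G[W]$ restricted to the $m_W$ colors, with $w=|W|$ and $\eta=1/(2d)$. It requires
\[
T\ge \frac{4}{\gamma}\bigl(\log\log(4w)+\log(2d)+1\bigr)\approx C'\log b\,\log\log(4b),
\]
which holds because $m\ge C_{\rm gr}\Lambda(b)$ with $C_{\rm gr}$ large in terms of $\beta$ and $d$. A union bound then shows that with positive probability all $d$ graphs $H[I_k]$ are connected. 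A spanning tree $T_k\subseteq H[I_k]$ then uses only colors in $I_k$, and the blocks are disjoint, so the color sets of the $T_k$ are pairwise disjoint.

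\textbf{Main obstacle.} The hardest part is Step 2: obtaining the logarithmic expansion $\gamma m\log(w/|A|)$ with $\gamma$ only losing a factor $1/\log b$. That loss is exactly what the $\log b$ factor in $\Lambda(b)$ must pay for. The choice of penalty in the maximal-density subgraph has to be tuned so that small sets always have proportionally more crossing colors.
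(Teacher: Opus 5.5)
There is a genuine gap in Step~2. Your outline matches the paper's: an extremal induced subgraph, then color expansion, then Lemma~\ref{lem:random-colors} applied to $d$ disjoint random color blocks with a union bound. Step~3 would also go through once the expansion is available. But Step~2 is left open, and neither functional you propose supplies it.

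Maximizing the plain density $e(U)/|U|$ gives no expansion at all, not even the constant-order bound you claim. With $\phi=e(W)/w$, maximality gives $e(A)\le\phi a$ and $e(W\setminus A)\le\phi(w-a)$. These add up to exactly $e(W)$, so you only learn $e(A,W\setminus A)\ge0$; a disjoint union of two equally dense pieces can be the maximizer.

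The penalized functional $e(U)/|U|-\kappa m\log(b/|U|)$ makes things worse, because the correction points the wrong way. Comparing $W$ with $A$ now gives $e(A)/a\le e(W)/w+\kappa m\log(w/a)$. That allows $A$ to be denser than $W$, and the resulting lower bound on crossing edges is negative. The pruning fallback is not an argument.

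The missing idea is to normalize by $|U|\log|U|$ instead of $|U|$. Choose $W$ with $|W|\ge2$ maximizing $\rho(U)=e(U)/(|U|\log|U|)$, and set $\rho=\rho(W)$.
\begin{itemize}
\item First, $\rho\ge\rho(V)\ge\beta mb/(b\log b)=\beta m/\log b$.
\item Maximality gives $e(A)\le\rho a\log a$ and $e(W\setminus A)\le\rho(w-a)\log(w-a)$, trivially when a side is a singleton.
\item The identity $w\log w-a\log a-(w-a)\log(w-a)=a\log\frac{w}{a}+(w-a)\log\frac{w}{w-a}$ then gives $e(A,W\setminus A)\ge\rho a\log\frac{w}{a}$.
\item Each color class is a matching, so it contributes at most $a$ crossing edges. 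Hence at least $\rho\log\frac{w}{a}\ge\frac{\beta m}{\log b}\log\frac{w}{a}$ distinct colors cross.
\end{itemize}
This is exactly the hypothesis of Lemma~\ref{lem:random-colors} with $\gamma=\beta/\log b$. The $\log b$ loss you anticipated comes from $\rho(V)$, not from any pruning.

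This expansion is measured against the full palette $[m]$, so your auxiliary claim $m_W\ge c(\beta)m$ is unnecessary. It also need not hold: the maximizer can be a single pair of vertices carrying only order $\beta m/\log b$ parallel edges. Lemma~\ref{lem:random-colors} tolerates colors that do not appear in $G[W]$. So you can draw $d$ disjoint blocks of size $T(b)=\lceil\frac{4\log b}{\beta}(\log\log(4b)+\log(2d)+1)\rceil$ directly from $[m]$. The hypothesis $m\ge C_{\rm gr}\Lambda(b)$ is precisely what guarantees $m\ge dT(b)$.
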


\begin{proof}
For integers \(s\ge2\), set
\[
T(s)=
\left\lceil
\frac{4\log s}{\beta}
\left(\log\log(4s)+\log(2d)+1\right)
\right\rceil.
\]
Since \(s\ge2\), \(d\ge1\), and \(\Lambda(s)\ge1\),
\[
\log s\log\log(4s)\le \Lambda(s),
\qquad
\log s\le O(\Lambda(s)),
\]
with an absolute implicit constant. Therefore
\[
T(s)\le O_\beta(\log(2d)\Lambda(s))
\qquad\text{for all }s\ge2.
\]
Choose \(C_{\rm gr}\ge d\), with \(C_{\rm gr}=O_\beta(d\log(2d))\), so that
\[
dT(s)\le C_{\rm gr}\Lambda(s)
\qquad\text{for all }s\ge2.
\]

Let \(G\) be as in the statement. For \(U\subseteq V(G)\), let \(e(U)\) be the number of edges of \(G[U]\), counted with multiplicity. Since each color class has at least \(\beta b\) edges,
\[
e(V(G))\ge \beta mb.
\]
Choose \(W\subseteq V(G)\), \(w:=|W|\ge2\), maximizing
\[
\rho(U)=\frac{e(U)}{|U|\log |U|}
\]
over all \(U\) with \(|U|\ge2\), and put \(\rho=\rho(W)\). Then
\[
\rho\ge \frac{e(V(G))}{b\log b}\ge \frac{\beta m}{\log b}.
\]
Let \(H=G[W]\). We verify the expansion hypothesis of Lemma~\ref{lem:random-colors}. Take a nonempty \(A\subseteq W\) with \(a:=|A|\le w/2\), and let \(D=W\setminus A\). By maximality of \(W\), with the interpretation \(e(U)=0=\rho |U|\log |U|\) when \(|U|=1\),
\[
e(A)\le \rho a\log a,
\qquad
e(D)\le \rho |D|\log |D|.
\]
Thus the number \(e(A,D)\) of edges crossing the cut satisfies
\[
\begin{aligned}
e(A,D)
&\ge \rho\bigl(w\log w-a\log a-(w-a)\log(w-a)\bigr)\\
&=\rho\left(a\log\frac wa+(w-a)\log\frac{w}{w-a}\right)\\
&\ge \rho a\log\frac wa.
\end{aligned}
\]
A single color is a matching, so it contributes at most \(a\) edges to this cut. Hence at least
\[
\rho\log\frac wa
\ge \frac{\beta m}{\log b}\log\frac wa
\]
distinct colors cross the cut. Lemma~\ref{lem:random-colors} applies to \(H\) with
\[
\gamma=\frac{\beta}{\log b}.
\]

Since
\[
m\ge C_{\rm gr}\Lambda(b)\ge dT(b),
\]
choose pairwise disjoint random sets \(I_1,\ldots,I_d\subseteq[m]\) with
\[
|I_1|=\cdots=|I_d|=T(b),
\]
for instance by taking consecutive blocks in a random permutation of \([m]\). Each \(I_r\) is marginally uniform. Because \(w\le b\), the definition of \(T(b)\) and Lemma~\ref{lem:random-colors} with \(\eta=1/(2d)\) give
\[
\Pr(H[I_r]\text{ is connected})\ge1-\frac1{2d}
\]
for each \(r\). By the union bound, with positive probability all \(H[I_r]\) are connected. Fix such a choice and select a spanning tree \(T_r\) of each \(H[I_r]\). The color sets of these trees are pairwise disjoint because the sets \(I_r\) are.
\end{proof}

\section{Algebraic certificates}

\begin{lemma}[Disjoint trees give a nonzero determinant]\label{lem:determinant}
Let \(F\) be a field. Let \(w\ge2\) and \(d\ge1\), and let \(G\) be an edge-colored multigraph on vertex set \([w]\), with colors in \([m]\). For each color \(i\in[m]\), introduce formal coefficients
\[
a_{i,1},\ldots,a_{i,d}
\]
and the formal linear form
\[
\lambda_i(z_1,\ldots,z_d)=\sum_{j=1}^d a_{i,j}z_j.
\]
Suppose \(G\) contains \(d\) spanning trees \(T_1,\ldots,T_d\) whose color sets are pairwise disjoint. Orient every edge of \(T_1\cup\cdots\cup T_d\) arbitrarily, counting edges with multiplicity. Let \(R\) be the \(d(w-1)\times d(w-1)\) matrix whose rows are indexed by these oriented edges \(e=(u,v)\), whose columns are indexed by \((s,j)\in[w-1]\times[d]\), and whose entries are
\[
R_{e,(s,j)}
=
(\mathbf 1_{s=u}-\mathbf 1_{s=v})a_{\chi(e),j},
\]
where \(\chi(e)\) is the color of \(e\). Then \(\det R\) is a nonzero polynomial in the variables \(a_{i,j}\).

Consequently, after any specialization \(a_{i,j}\mapsto \alpha_{i,j}\in F\) for which \(\det R(\alpha)\ne0\), if points \(q_1,\ldots,q_w\in F^d\) satisfy
\[
\sum_{j=1}^d \alpha_{\chi(e),j}(q_u-q_v)_j=0
\]
for every oriented edge \(e=(u,v)\) of \(T_1\cup\cdots\cup T_d\), then
\[
q_1=\cdots=q_w.
\]
\end{lemma}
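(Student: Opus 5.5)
To prove that \(\det R\) is a nonzero polynomial, it suffices to exhibit one specialization of the variables \(a_{i,j}\) (over \(F\), or over any extension field, since nonvanishing of a polynomial with coefficients in the prime field can be tested there) at which the determinant is nonzero. The color sets of \(T_1,\ldots,T_d\) are pairwise disjoint, so we may set, for every color \(i\) used by \(T_r\), \(a_{i,j}=\mathbf 1_{j=r}\). Colors not used by any tree do not appear in \(R\) at all. Each tree has exactly \(w-1\) edges, so \(R\) is indeed square. Reorder its rows by tree and its columns by the coordinate index \(j\). Under this specialization, a row coming from an edge \(e=(u,v)\) of \(T_r\) is supported only on columns \((s,r)\), where it equals \(\mathbf 1_{s=u}-\mathbf 1_{s=v}\). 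Hence \(R\) becomes block diagonal with \(d\) blocks \(B_1,\ldots,B_d\), where \(B_r\) is the signed incidence matrix of the oriented tree \(T_r\) with the column of vertex \(w\) deleted.

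The key step is then the classical fact that a reduced incidence matrix of a spanning tree has determinant \(\pm1\), hence nonzero in any characteristic. I would prove this by induction on \(w\). Pick a leaf \(x\neq w\) of \(T_r\); one exists because a tree with \(w\ge2\) vertices has at least two leaves. The column of \(x\) has a single nonzero entry \(\pm1\), in the row of its pendant edge. Expanding along this column leaves the reduced incidence matrix of \(T_r-x\) with the column of \(w\) still deleted, and we apply induction. The base case \(w=2\) is a \(1\times1\) matrix \(\pm1\). Alternatively, one can argue via rank: the kernel of the full incidence map of a connected graph consists of the constant vectors, and deleting a column kills it. The determinant route is cleaner because it is characteristic-free. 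Therefore \(\det R\) specializes to \(\pm1\), and so \(\det R\) is a nonzero polynomial.

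For the consequence, set \(y_s=q_s-q_w\in F^d\) for \(s\in[w-1]\), and note \(y_w=0\). For an oriented edge \(e=(u,v)\), the equation \(\sum_j\alpha_{\chi(e),j}(q_u-q_v)_j=0\) reads \(\sum_j \alpha_{\chi(e),j}\bigl((y_u)_j-(y_v)_j\bigr)=0\), with the term for \(w\) simply absent. This is exactly row \(e\) of \(R(\alpha)\) applied to the vector \(((y_s)_j)_{(s,j)}\). So \(R(\alpha)y=0\). Since \(\det R(\alpha)\ne0\), it follows that \(y=0\), i.e.\ every \(q_s=q_w\).

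The main obstacle is modest. It lies in the bookkeeping: checking that the specialization really produces a block-diagonal matrix, and that multigraph edges counted with multiplicity cause no trouble. The latter holds because within one tree there are no parallel edges, and parallel edges across different trees land in different blocks. Row and column permutations change the determinant only by a sign.
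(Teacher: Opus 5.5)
Your proof is correct and matches the paper's. You use the same specialization $a_{i,j}=\mathbf 1_{j=r}$ for the colors of $T_r$, which makes the matrix block-diagonal with reduced tree incidence matrices as blocks, and you derive the consequence the same way, by reading the edge equations as $R(\alpha)Q=0$ with $Q=((q_s-q_w)_j)$. The only difference is cosmetic: you prove $\det=\pm1$ for a reduced tree incidence matrix by leaf-expansion induction, whereas the paper roots the tree at $w$ and orders vertices by decreasing distance to get a triangular matrix with unit diagonal.
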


\begin{proof}
Let \(C_r\) be the set of colors appearing in \(T_r\). The sets \(C_1,\ldots,C_d\) are pairwise disjoint. Specialize the variables by setting \(a_{i,r}=1\) and \(a_{i,j}=0\) for \(j\ne r\) whenever \(i\in C_r\), and setting all unused-color variables to \(0\). With rows grouped by the trees and columns grouped by the coordinate \(j\), the matrix becomes block diagonal. The \(r\)th block is, up to row signs and row and column permutations, the reduced incidence matrix of the spanning tree \(T_r\) with the column for vertex \(w\) deleted.

A reduced incidence matrix of a tree has determinant \(\pm1\): root the tree at \(w\), orient edges toward the root, order non-root vertices by decreasing distance from \(w\), and order rows by the corresponding child vertices. The resulting reduced incidence matrix is triangular with diagonal entries \(1\). Changing orientations or orders only changes the sign. Therefore, under the specialization above,
\[
\det R=\pm1,
\]
so \(\det R\) is not the zero polynomial over \(F\).

For the consequence, suppose \(\det R(\alpha)\ne0\). Let
\[
Q=\bigl((q_s-q_w)_j\bigr)_{s\in[w-1],\,j\in[d]}\in F^{d(w-1)}.
\]
The edge equations are exactly
\[
R(\alpha)Q=0,
\]
because the terms involving \(q_w\) cancel and there is no column for the base vertex \(w\). Since \(R(\alpha)\) is invertible, \(Q=0\), and hence every \(q_s\) equals \(q_w\).
\end{proof}

\begin{definition}[Tree certificates and good linear forms]\label{def:good}
Fix integers \(B\ge2\), \(m\ge1\), and \(d\ge1\), and a field \(F\). A tree certificate up to size \(B\) with \(m\) colors and \(d\) trees is a tuple
\[
\mathcal T=(w,T_1,\ldots,T_d)
\]
such that \(2\le w\le B\), each \(T_r\) is a colored spanning tree on vertex set \([w]\) with colors in \([m]\), and the color sets of \(T_1,\ldots,T_d\) are pairwise disjoint. 
A colored spanning tree may use the same color on more than one edge; its color set is the set of colors that appear at least once.
For each certificate, fix an arbitrary orientation and ordering of the rows and columns in Lemma~\ref{lem:determinant}, and let \(P_{\mathcal T}\) be the resulting determinant polynomial.

Linear forms
\[
\lambda_i(z_1,\ldots,z_d)=\sum_{j=1}^d \alpha_{i,j}z_j,
\qquad i\in[m],
\]
are called good up to \(B\) if
\[
P_{\mathcal T}(\alpha)\ne0
\]
for every tree certificate \(\mathcal T\) up to size \(B\).
\end{definition}

\begin{proposition}[Random forms avoid all small certificates]\label{prop:random-good}
Let \(B\ge2\), \(m\ge1\), \(d\ge1\), and let \(p\) be a prime. Choose \(m\) random linear forms
\[
\lambda_i(z_1,\ldots,z_d)=\sum_{j=1}^d a_{i,j}z_j,
\qquad i\in[m],
\]
with all coefficients \(a_{i,j}\) independent and uniformly distributed in \(\Fp\). Then
\[
\Pr(\lambda_1,\ldots,\lambda_m\text{ are good up to }B)
\ge
1-\frac{dB^2(B^2m)^{dB}}{p}.
\]
\end{proposition}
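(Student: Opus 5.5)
The plan is a union bound over certificates combined with Lemma~\ref{lem:schwartz-zippel}. For each fixed tree certificate \(\mathcal T=(w,T_1,\ldots,T_d)\) up to size \(B\), Lemma~\ref{lem:determinant} says that \(P_{\mathcal T}\) is a nonzero polynomial over \(\Fp\) in the variables \(a_{i,j}\). It is the determinant of a \(d(w-1)\times d(w-1)\) matrix whose entries are either \(0\) or \(\pm a_{i,j}\). Every entry therefore has degree at most \(1\), and the Leibniz expansion gives total degree at most \(d(w-1)\le dB\).

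Applying Lemma~\ref{lem:schwartz-zippel} with \(S=\Fp\) and the independent uniform coefficients then gives
\[
\Pr\bigl(P_{\mathcal T}(a)=0\bigr)\le \frac{dB}{p}.
\]
The forms fail to be good up to \(B\) exactly when some certificate has \(P_{\mathcal T}(a)=0\). So it suffices to show that the number of certificates is at most \(B(B^2m)^{dB}\), since then the union bound gives failure probability at most \(dB\cdot B(B^2m)^{dB}/p\le dB^2(B^2m)^{dB}/p\).

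The counting step is where care is needed, though it is crude. I would fix \(w\in\{2,\ldots,B\}\), which gives at most \(B\) choices. A colored spanning tree on \([w]\) is specified by its \(w-1\) edges. Each edge can be encoded by an ordered pair of endpoints (at most \(w^2\le B^2\) choices) together with a color (at most \(m\) choices). This gives at most \((B^2m)^{w-1}\) encodings per tree. This overcounts, because edge orderings and orientations are not canonical, but overcounting only helps an upper bound.

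For the \(d\) trees together, the count is at most \((B^2m)^{d(w-1)}\le (B^2m)^{dB}\), using \(B^2m\ge1\). We ignore the disjointness constraint, again because dropping it only enlarges the count. The orientation and ordering fixed in Definition~\ref{def:good} are a deterministic function of the certificate, so they contribute no extra factor. Summing over \(w\) then gives at most \(B(B^2m)^{dB}\) certificates, which is what the union bound needed. The only point to double-check is that \(P_{\mathcal T}\) is defined for each certificate tuple, so the events are indexed by tuples and the overcount is harmless.
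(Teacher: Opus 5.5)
Your proposal is correct and follows essentially the same route as the paper: you count certificates by encoding each tree as an ordered list of \(w-1\) (endpoint, endpoint, color) triples, bound each \(P_{\mathcal T}\) by degree \(d(w-1)\le dB\), and combine Lemma~\ref{lem:schwartz-zippel} with a union bound over at most \(B(B^2m)^{dB}\) certificates. This gives exactly the stated estimate.
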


\begin{proof}
For fixed \(w\), a colored spanning tree on \([w]\) is overcounted by an ordered list of \(w-1\) triples consisting of two endpoints and a color, giving at most \((B^2m)^{w-1}\) choices. Thus the number of \(d\)-tuples of colored spanning trees for this \(w\) is at most
\[
(B^2m)^{d(w-1)}\le(B^2m)^{dB}.
\]
There are at most \(B\) choices for \(w\), and the disjointness condition only decreases the count, so the number of certificates is at most
\[
B(B^2m)^{dB}.
\]
For a fixed certificate \(\mathcal T=(w,T_1,\ldots,T_d)\), Lemma~\ref{lem:determinant} gives a nonzero determinant polynomial \(P_{\mathcal T}\) over \(\Fp\). Its matrix has size \(d(w-1)\) and entries of degree at most \(1\), so
\[
\deg P_{\mathcal T}\le d(w-1)\le dB.
\]
By Lemma~\ref{lem:schwartz-zippel}, this polynomial vanishes at the random coefficient array with probability at most \(dB/p\). The union bound over all certificates gives the stated estimate.
\end{proof}

\section{The deterministic implication}

\begin{lemma}[Good forms rule out large approximate list intersections]\label{lem:deterministic}
Let $\alpha,\varepsilon\in(0,1)$ and $d\in\Zpos$, and put
\[
K=\frac{1+\varepsilon}{1-\alpha},
\qquad
\mu=\frac{(1-\alpha)\varepsilon}{2(1+\varepsilon)},
\qquad
\beta=\frac\mu2,
\qquad
\theta=\frac{\mu}{1-\mu}.
\]
Let $C_{\rm gr}=C_{\rm gr}(\beta,d)$ be the constant from Lemma~\ref{lem:many-matchings}. Let $B\ge2$, let $n\in\Zpos$ satisfy
\[
n\ge \frac{C_{\rm gr}}{\theta}\Lambda(B),
\]
let $p$ be a prime, and suppose that linear forms
\[
\lambda_1,\ldots,\lambda_n:\Fp^d\to\Fp
\]
are good up to $B$. Then there do not exist a set $A\subseteq\Fp^d$, a positive integer $\ell$, and sets
\[
S_1,\ldots,S_n\subseteq\Fp,
\qquad |S_1|=\cdots=|S_n|=\ell,
\]
such that
\[
2\le |A|\le B,
\qquad
|A|>K\ell,
\qquad
\left|\{i:\lambda_i(a)\notin S_i\}\right|\le \alpha n
\quad\text{for every }a\in A.
\]
\end{lemma}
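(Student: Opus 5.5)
We argue by contradiction: assume such $A$, $\ell$, and $S_1,\ldots,S_n$ exist, and put $b=|A|$. The plan is to build an edge-colored multigraph $G$ on $A$ whose colors are coordinates, to which Lemma~\ref{lem:many-matchings} applies, and then to use goodness (via Lemma~\ref{lem:determinant}) to force two distinct points of $A$ to coincide.

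\emph{Step 1 (counting).} For each coordinate $i$ let $A_i=\{a\in A:\lambda_i(a)\in S_i\}$. Summing over $a\in A$ gives
\[
\sum_i |A_i|\ge (1-\alpha)nb.
\]
The set $A_i$ is partitioned into at most $\ell$ nonempty fibers $\lambda_i^{-1}(s)\cap A$, $s\in S_i$. Inside each fiber we pair up the points arbitrarily; this gives a matching $M_i$ with
\[
|M_i|\ge \frac{|A_i|-\ell}{2}.
\]
Since $b>K\ell$, we have $\ell<(1-\alpha)b/(1+\varepsilon)$. Averaging over $i$, the mean of $|A_i|-\ell$ is at least
\[
(1-\alpha)b\Bigl(1-\tfrac{1}{1+\varepsilon}\Bigr)=\frac{(1-\alpha)\varepsilon}{1+\varepsilon}\,b=2\mu b,
\]
so the mean of $|M_i|$ is at least $\mu b$. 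Each $|M_i|\le b/2$. Call $i$ good if $|M_i|\ge \beta b=\mu b/2$, and let $g$ be the number of good coordinates. Splitting the average gives
\[
\mu b n\le g\cdot\tfrac b2+(n-g)\tfrac{\mu b}{2},
\]
hence $g\ge \frac{\mu}{1-\mu}\,n=\theta n$. This computation is exactly why $\theta$ is defined as it is. Because $b\le B$ and $\Lambda$ is nondecreasing, we then have
\[
g\ge\theta n\ge C_{\rm gr}\Lambda(B)\ge C_{\rm gr}\Lambda(b).
\]

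\emph{Step 2 (graph).} Let $G$ have vertex set $A$, with color classes the matchings $M_i$ for good $i$, relabeled as colors $[g]$. $G$ is loopless because pairs consist of distinct points. Each color class is a matching with at least $\beta b$ edges. Lemma~\ref{lem:many-matchings} therefore yields $W\subseteq A$ with $|W|\ge2$ and spanning trees $T_1,\ldots,T_d$ of $G[W]$ with pairwise disjoint color sets.

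\emph{Step 3 (algebra).} Identify $W$ with $[w]$, where $2\le w\le B$, and relabel each good color $k$ back to its coordinate $i_k\in[n]$. The tuple $(w,T_1,\ldots,T_d)$ is then a tree certificate up to size $B$ with $n$ colors, so goodness gives $P_{\mathcal T}(\alpha)\ne0$ by Definition~\ref{def:good}. The determinant's nonvanishing does not depend on the choice of orientation or ordering: changing either only changes the sign. Every edge $\{a,a'\}$ of color $i$ lies in a single fiber of $\lambda_i$, so $\lambda_i(a-a')=0$, which is exactly the edge equation. Lemma~\ref{lem:determinant} then forces all points of $W$ to be equal. This contradicts the fact that $W$ consists of at least two distinct elements of $A$.

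The main obstacle is the counting in Step 1. One has to get the threshold $\beta b$ and the fraction $\theta$ of good colors to match the stated constants, handling the subtraction of $\ell$ fibers and the cap $|M_i|\le b/2$ carefully. The rest is a direct chaining of Lemmas~\ref{lem:many-matchings} and~\ref{lem:determinant}. I would also check the bookkeeping that the colors inherit the original coordinate indices $i\in[n]$, so that the certificate is indeed one with $m=n$ colors.
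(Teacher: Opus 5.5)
Your proposal is correct and follows the paper's proof essentially step for step. You pair points inside the at most $\ell$ fibers to get matchings $M_i$, average to show that at least $\theta n$ colors satisfy $|M_i|\ge\beta b$, apply Lemma~\ref{lem:many-matchings}, and then use goodness with Lemma~\ref{lem:determinant} to force the points of $W$ to coincide. Your small deviations are all harmless: you drop the positive part in $|M_i|\ge(|A_i|-\ell)/2$, you settle for $g\ge\theta n$ instead of the strict inequality, and you explicitly note sign-invariance under reorientation and reordering.
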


\begin{proof}
Assume such $A$, $\ell$, and $S_1,\ldots,S_n$ exist, and put $b=|A|$. For each coordinate $i$, let
\[
A_i=\{a\in A:\lambda_i(a)\in S_i\},
\qquad
 a_i=|A_i|.
\]
The hypothesis on the number of bad coordinates gives
\[
\sum_{i=1}^n a_i\ge (1-\alpha)nb.
\]
For each $i$, partition $A_i$ into the nonempty fibers of $\lambda_i$. Since $\lambda_i(A_i)\subseteq S_i$, there are at most $\ell$ such fibers. Pair points arbitrarily inside each fiber, leaving at most one point unpaired per fiber. This gives a matching $M_i$ on the vertex set $A$ such that
\[
|M_i|\ge \frac12(a_i-\ell)_+,
\qquad (x)_+:=\max\{x,0\}.
\]
Therefore
\[
\sum_{i=1}^n |M_i|
\ge \frac12\sum_{i=1}^n(a_i-\ell)_+
\ge \frac12\left(\sum_{i=1}^n a_i-n\ell\right)
\ge \frac n2((1-\alpha)b-\ell).
\]
Since $b>K\ell=\frac{1+\varepsilon}{1-\alpha}\ell$, we have
\[
(1-\alpha)b-\ell
>
\frac{(1-\alpha)\varepsilon}{1+\varepsilon}b,
\]
and hence
\[
\sum_{i=1}^n |M_i|>\mu nb.
\]
Let
\[
I=\{i\in[n]: |M_i|\ge \beta b\}.
\]
Each matching has size at most $b/2$, so
\[
\mu nb
<\sum_{i=1}^n |M_i|
\le |I|\frac b2+(n-|I|)\beta b.
\]
Using $\beta=\mu/2$ and cancelling $b$ gives
\[
|I|>\frac{\mu}{1-\mu}n=\theta n.
\]
In particular,
\[
|I|\ge C_{\rm gr}\Lambda(B).
\]

Form a loopless edge-colored multigraph $G$ on vertex set $A$ by inserting the matching $M_i$ in color $i$ for every $i\in I$. Each color class is a matching of size at least $\beta b$. Since $b\le B$ and $\Lambda$ is nondecreasing, Lemma~\ref{lem:many-matchings}, applied after relabeling the colors in $I$, gives a subset $W\subseteq A$, $w:=|W|\ge2$, and $d$ spanning trees
\[
T_1,\ldots,T_d\subseteq G[W]
\]
whose color sets are pairwise disjoint.

Relabel $W$ as $[w]$, and write the corresponding points of $\Fp^d$ as
\[
q_1,\ldots,q_w.
\]
If an edge of color $i$ is oriented as $(u,v)$, then its endpoints were paired inside a single fiber of $\lambda_i$, so
\[
\lambda_i(q_u-q_v)=0.
\]
The trees $T_1,\ldots,T_d$ form a tree certificate up to size $B$ using colors from $[n]$. Since the forms are good up to $B$, the associated determinant is nonzero at their coefficient values. Lemma~\ref{lem:determinant} implies
\[
q_1=\cdots=q_w,
\]
contradicting the fact that $W$ consists of at least two distinct points of the set $A$.
\end{proof}

\section{Proof of the main theorem}

\begin{proof}[Proof of Theorem~\ref{thm:main}]
Fix $\alpha,\varepsilon\in(0,1)$ and $d\in\Zpos$. Put
\[
K=\frac{1+\varepsilon}{1-\alpha},
\qquad
\mu=\frac{(1-\alpha)\varepsilon}{2(1+\varepsilon)},
\qquad
\beta=\frac\mu2,
\qquad
\theta=\frac{\mu}{1-\mu},
\]
and let $C_{\rm gr}=C_{\rm gr}(\beta,d)$. By Lemma~\ref{lem:many-matchings}, we may take
\[
C_{\rm gr}=O_{\alpha,\varepsilon}(d\log(2d)).
\]
Choose $0<c=c(\alpha,\varepsilon)\le1$ sufficiently small, and set
\[
\delta=\frac{c}{C_{\rm gr}}.
\]
We choose $n_0\ge d+1$ with $n_0=O_{\alpha,\varepsilon}(C_{\rm gr})$ so that, for every $N\ge n_0$,
\[
N\ge \frac{C_{\rm gr}}{\theta}\Lambda\left(\left\lceil K2^{\delta N/\log N}\right\rceil+1\right).
\]
Indeed, for $x\ge0$ there is a constant $C_0=C_0(\alpha,\varepsilon)$ such that
\[
\Lambda\left(\left\lceil K2^x\right\rceil+1\right)
\le C_0(1+x)\log(e+x).
\]
Increasing $C_0$ if necessary, for all sufficiently large $N$ and all $0\le x\le N/\log N$ this gives
\[
\Lambda\left(\left\lceil K2^x\right\rceil+1\right)
\le C_0(1+x\log N).
\]
Since $\delta\le1$, we may apply this with $x=\delta N/\log N$. Taking $c\le \theta/(2C_0)$ and then taking $n_0\ge (2C_0/\theta)C_{\rm gr}$, enlarged if necessary to handle the fixed ``sufficiently large'' threshold, gives for all $N\ge n_0$
\[
\frac{C_{\rm gr}}{\theta}\Lambda\left(\left\lceil K2^{\delta N/\log N}\right\rceil+1\right)
\le \frac{C_0}{\theta}C_{\rm gr}+\frac{C_0c}{\theta}N
\le N.
\]
Thus, since $C_{\rm gr}=O_{\alpha,\varepsilon}(d\log(2d))$, we have $n_0=O_{\alpha,\varepsilon}(d\log d)$ for large enough $d$.

For $N\ge n_0$, define
\[
B_N=\left\lceil K2^{\delta N/\log N}\right\rceil+1
\]
and
\[
f(N)=\left\lceil \frac{2dB_N^2(B_N^2N)^{dB_N}}{\varepsilon}\right\rceil.
\]
Define $f$ arbitrarily on the finitely many positive integers $N<n_0$. This gives a computable function $f:\Zpos\to\Zpos$.

Now fix $n\ge n_0$ and a prime $p\ge f(n)$. Let $M$ be a random $n\times d$ matrix over $\Fp$, with all entries independent and uniform. Its rows define linear forms
\[
\lambda_1,\ldots,\lambda_n:\Fp^d\to\Fp,
\qquad
(Mt)_i=\lambda_i(t).
\]
Let $\mathcal R$ be the event $\rank M=d$. For a fixed nonzero $v\in\Fp^d$, the vector $Mv$ is uniformly distributed in $\Fp^n$, so
\[
\Pr(Mv=0)=p^{-n}.
\]
Thus
\[
\Pr(\mathcal R^\complement)
\le (p^d-1)p^{-n}
\le p^{d-n}
\le p^{-1},
\]
and in particular $\Pr(\mathcal R)\ge 1/2$.

Let $E$ be the event that the $n$ forms $\lambda_1,\ldots,\lambda_n$ are good up to $B_n$. Proposition~\ref{prop:random-good}, with $m=n$ and $B=B_n$, gives
\[
\Pr(E^\complement)
\le
\frac{dB_n^2(B_n^2n)^{dB_n}}{p}
\le \frac{\varepsilon}{2}.
\]
Consequently,
\[
\Pr(E^\complement\mid\mathcal R)
\le \frac{\Pr(E^\complement)}{\Pr(\mathcal R)}
\le \varepsilon.
\]

We claim that on $E\cap\mathcal R$, the code
\[
C_M=\operatorname{im}M\subseteq\Fp^n
\]
is $(\alpha,\ell,K\ell)$-list recoverable for every positive integer $\ell\le 2^{\delta n/\log n}$. Suppose not. Then for some such $\ell$ and some sets
\[
S_1,\ldots,S_n\subseteq\Fp,
\qquad |S_1|=\cdots=|S_n|=\ell,
\]
there are more than $K\ell$ codewords $x\in C_M$ such that
\[
\left|\{i:x_i\notin S_i\}\right|\le\alpha n.
\]
Choose
\[
b=\lfloor K\ell\rfloor+1
\]
distinct such codewords. Since $\mathcal R$ holds, the map $t\mapsto Mt$ is injective, so these codewords have a unique preimage set $A\subseteq\Fp^d$ of size $b$. Moreover,
\[
2\le b\le K\ell+1\le K2^{\delta n/\log n}+1\le B_n.
\]
For every $a\in A$, the corresponding codeword $Ma$ has at most $\alpha n$ coordinates outside the lists $S_i$, equivalently
\[
\left|\{i:\lambda_i(a)\notin S_i\}\right|\le\alpha n.
\]
This contradicts Lemma~\ref{lem:deterministic}, because $E$ says that $\lambda_1,\ldots,\lambda_n$ are good up to $B_n$ and the choice of $\delta,n_0$ gives
\[
n\ge \frac{C_{\rm gr}}{\theta}\Lambda(B_n).
\]
Thus $C_M$ satisfies the desired list-recovery conclusion on $E\cap\mathcal R$.

It remains to pass from random matrices to uniform random subspaces. Conditioned on $\mathcal R$, the columns of $M$ form an ordered basis of $\operatorname{im}M$. Conversely, every fixed $d$-dimensional subspace of $\Fp^n$ has exactly $|\GL_d(\Fp)|$ ordered bases, so $\operatorname{im}M$ conditioned on $\mathcal R$ is uniformly distributed over all $d$-dimensional subspaces of $\Fp^n$. Therefore the failure probability for a uniformly random $d$-dimensional linear code is at most
\[
\Pr(E^\complement\mid\mathcal R)\le\varepsilon,
\]
and the theorem follows.
\end{proof}

\section{Proof of the lower bound}

\begin{proof}[Proof of Theorem~\ref{thm:lower}]
Fix $\alpha,\varepsilon\in(0,1)$ and put
\[
K=\frac{1+\varepsilon}{1-\alpha}.
\]
Choose an integer $T>K+2$, set $\delta=\log_2 T$, and take $n_0=2$. Define, for example,
\[
f(N)=\left\lceil \max\{T^N,2^{2N+2}(K+1)\}\right\rceil.
\]
This is a computable function $\Zpos\to\Zpos$.

Let $n\ge n_0$, let $p\ge f(n)$ be prime, and let $C\subseteq\Fp^n$ be a linear code of dimension at least two. Fix an integer $\ell$ with
\[
T^n=2^{\delta n}\le \ell\le p.
\]
It suffices to construct lists $S_1,\ldots,S_n\subseteq\Fp$ of size $\ell$ and more than $K\ell$ codewords of $C$ that meet every list in every coordinate.

Choose a two-dimensional subcode $D\subseteq C$ and identify it with $\Fp^2$. Under this identification, the $i$th coordinate map on $D$ is a linear form
\[
\lambda_i:\Fp^2\to\Fp.
\]
The nonzero forms among the $\lambda_i$ span the dual of $\Fp^2$: otherwise some nonzero vector of $D$ would vanish in every coordinate, contradicting the fact that $D\subseteq\Fp^n$ is a two-dimensional subcode. Hence their kernels include at least two distinct one-dimensional subspaces. Let
\[
U_1,\ldots,U_r
\]
be the distinct kernels that occur among the nonzero coordinate forms, where $2\le r\le n$, and choose a nonzero vector $u_j\in U_j$ for each $j$.

We next choose side lengths. Set
\[
t_2=\cdots=t_r=T,
\qquad
t_1=\left\lfloor \frac{K\ell}{T^{r-1}}\right\rfloor+1,
\qquad
P=\prod_{j=1}^r t_j.
\]
Since $\ell\ge T^n\ge T^{r-1}$, we have
\[
P>K\ell.
\]
Also,
\[
\frac{P}{t_1}=T^{r-1}\le \ell,
\]
and for $j\ge2$,
\[
\frac{P}{t_j}=t_1T^{r-2}
\le \left(\frac{K\ell}{T^{r-1}}+1\right)T^{r-2}
=\frac{K}{T}\ell+T^{r-2}
\le \ell,
\]
because $T>K+2$ and $\ell\ge T^{r-1}$. Finally,
\[
P\le K\ell+T^{r-1}\le (K+1)\ell.
\]

We claim that the vectors $u_1,\ldots,u_r$ can be rescaled so that all sums
\[
\sum_{j=1}^r a_j s_j u_j,
\qquad
0\le a_j<t_j,
\]
are distinct. Indeed, for a nonzero difference vector
\[
\Delta=(\Delta_1,
\ldots,
\Delta_r),
\qquad
-(t_j-1)\le \Delta_j\le t_j-1,
\]
consider the bad equation
\[
\sum_{j=1}^r \Delta_j s_j u_j=0
\]
in the variables $s_1,
\ldots,s_r$. First note that all side lengths $t_j$ are less than $p$: this is clear for $j\ge2$, and for $j=1$ we use $r\ge2$ and $\ell\le p$ to get
\[
t_1\le \frac{K\ell}{T^{r-1}}+1\le \frac{Kp}{T}+1<p,
\]
where the final inequality follows from $T>K+2$ and $p\ge T^n\ge T^2$. Hence a nonzero integer $\Delta_j$ in the displayed range remains nonzero in $\Fp$. If $\Delta$ is supported on a single index, the bad equation has no solution with all $s_j\ne0$. If $\Delta$ has support of size at least two, then the involved vectors include two nonparallel vectors, so the equation imposes two independent linear conditions and has at most $p^{r-2}$ solutions in $\Fp^r$. The number of possible nonzero $\Delta$ is at most
\[
\prod_{j=1}^r (2t_j-1)
\le 2^rP
\le 2^n(K+1)\ell
\le 2^n(K+1)p.
\]
Thus the total number of bad choices of $(s_1,
\ldots,s_r)$ is at most
\[
2^n(K+1)p^{r-1}.
\]
Since $p\ge 2^{2n+2}(K+1)$ and $r\le n$, we have $p>2^{n+1}(K+1)$ and $p\ge2r$. Therefore
\[
2^n(K+1)p^{r-1}<\frac{p^r}{2}\le (p-1)^r,
\]
where the last inequality follows from $(1-1/p)^r\ge 1-r/p\ge 1/2$. Thus the bad choices do not cover all $(p-1)^r$ choices of nonzero scalars, so there is a choice of nonzero scalars $s_1,
\ldots,s_r$ for which all the displayed sums are distinct.

Fix such scalars and define
\[
A=\left\{\sum_{j=1}^r a_js_ju_j:0\le a_j<t_j\right\}\subseteq\Fp^2.
\]
Then $|A|=P>K\ell$. Under the identification $D\cong\Fp^2$, view $A$ as a set of codewords in the subcode $D\subseteq C$. If $\lambda_i=0$, then $|\lambda_i(A)|=1$. Otherwise, $\ker\lambda_i=U_j$ for some $j$, and the coefficient $a_j$ does not affect $\lambda_i$; hence
\[
|\lambda_i(A)|\le \prod_{h\ne j}t_h=\frac{P}{t_j}\le\ell.
\]
For each coordinate $i$, choose an $\ell$-element set $S_i\subseteq\Fp$ containing $\lambda_i(A)$, which is possible because $\ell\le p$. Every codeword in $A$ then lies in $S_i$ in every coordinate $i$. Therefore more than $K\ell$ codewords of $C$ have zero bad coordinates with respect to the lists $S_1,
\ldots,S_n$, so $C$ is not $(\alpha,\ell,K\ell)$-list recoverable.
\end{proof}

\section*{Acknowledgements}
This research was supported in part from a Simons Investigator Award, DARPA expMath award, Laude Moonshot grant, NSF grant 2333935, BSF grant 2022370, a Xerox Faculty Research Award, a Google Faculty Research Award, an Okawa Foundation Research Grant, and the Symantec Chair of Computer Science. This material is based upon work supported by the Defense Advanced Research Projects Agency through Award HR001126CE054.

\end{document}